\newcommand{\GRS}{\mathbf{GRS}}
\newcommand{\RS}{\mathbf{RS}}
\newcommand{\card}[1]{|\left\{#1\right\}|}
\newcommand{\bm}[1]{\boldsymbol{#1}}
\DeclareMathOperator{\tr}{Tr}
\DeclareMathOperator{\Ca}{C_{\bm a}}
\DeclareMathOperator{\supp}{supp}
\DeclareMathOperator{\wt}{wt}
\begin{document}

\title{On Linear Codes with Random Multiplier Vectors and the Maximum Trace Dimension Property\thanks{This research was supported by the Ministry of Culture and Innovation and the National Research, Development and Innovation Office within the Quantum Information National Laboratory of Hungary (Grant No. 2022-2.1.1-NL-2022-00004). Partially founded by NKFIH Grants K129335, K138596, K135885 and FK127906.}}

\author{ 
	M\'arton Erd\'elyi\inst{1}\orcidID{0000-0003-3945-8265} 
	\and P\'al Heged\"us\inst{1}\orcidID{0000-0002-0784-6362} 
	\and S\'andor Z. Kiss\inst{1}\orcidID{0000-0001-9912-8474}
	\and G\'abor P. Nagy\inst{1,2}\orcidID{0000-0002-9558-4197}
}
\institute{Department of Algebra and Geometry,
	Budapest University of Technology and Economics,
	M\H{u}egyetem rkp. 3,
	H-1111 Budapest, Hungary
\and Bolyai Institute,
	University of Szeged,
	Aradi v\'ertan\'uk tere 1,
	H-6720 Szeged, Hungary
\email{\{merdelyi,hegpal,ksandor\}@math.bme.hu, nagyg@math.u-szeged.hu}
}

\maketitle

\begin{abstract}
Let $C$ be a linear code of length $n$ and dimension $k$ over the finite field $\mathbb{F}_{q^m}$. The trace code $\tr(C)$ is a linear code of the same length $n$ over the subfield $\mathbb{F}_q$. The obvious upper bound for the dimension of the trace code over $\mathbb{F}_q$ is $mk$. If equality holds, then we say that $C$ has maximum trace dimension. The problem of finding the true dimension of trace codes and their duals is relevant for the size of the public key of various code-based cryptographic protocols. Let $\Ca$ denote the code obtained from $C$ and a multiplier vector $\bm{a}\in (\mathbb{F}_{q^m})^n$. In this paper, we give a lower bound for the probability that a random multiplier vector produces a code $\Ca$ of maximum trace dimension. We give an interpretation of the bound for the class of algebraic geometry codes in terms of the degree of the defining divisor. The bound explains the experimental fact that random alternant codes have minimal dimension. Our bound holds whenever $n\geq m(k+h)$, where $h\geq 0$ is the Singleton defect of $C$. For the extremal case $n=m(h+k)$, numerical experiments reveal a closed connection between the probability of having maximum trace dimension and the probability that a random matrix has full rank.
\keywords{Trace Codes \and Subfield Subcodes \and Dimension of Trace Codes \and Random Alternant Codes \and Weight Enumerator \and Singleton Defect}
\end{abstract}

\section{Introduction}

\subsection*{Code-based post-quantum cryptosystems}
Recent research has focused extensively on quantum computers that use quantum mechanical techniques to solve difficult mathematical computational problems \cite{google2019quantum}. The existence of these potent devices poses a threat to numerous widely used public-key cryptosystems \cite{shor1994polynomial}. 1978 marked the introduction of the first code-based public-key cryptosystem by McEliece \cite{mceliece1978public}. One of the most pressing problems in cryptography today is to reduce the key size and enhance the security level of the McEliece cryptosystem, which is a promising cryptographic scheme for the post-quantum era \cite{Nist}. Error-correcting codes used in code-based cryptographic protocols must be decoded with efficient algorithms. The family of algebraic geometry (AG) codes and their subcodes and subfield subcodes constitute a rich class of such codes. These include the generalized Reed-Solomon, alternant, binary Goppa, and BCH codes. For a survey on decoding AG codes, see \cite{hoholdt1998algebraic}.

The authors of \cite{couvreur2015cryptanalysis,Couvreur2017,couvreur2016polynomial} provided polynomial-time attacks against the McEliece cryptosystem that employs AG codes or their subcodes. In general, evaluation codes do not operate like random codes. This enables a wide variety of attacks against the McEliece cryptosystem based on AG codes. The technique described in \cite{Couvreur2017,couvreur2016polynomial} is inspired by the so-called \emph{filtration attacks} that rely on computing the dimension of the Schur product that makes AG codes distinguishable from random ones. This observation was used by Wieschebrink \cite{wieschebrink2010cryptanalysis} to provide an attack against the McEliece scheme based on subcodes of generalized Reed-Solomon codes \cite{berger2005mask}. Numerous attacks have employed a combination of powerful techniques, such as the filtration method, an error-correcting pair (ECP), or an error-correcting array (ECA), leading to a key recovery attack or a blind reconstruction of a decoding algorithm \cite{couvreur2014distinguisher,Couvreur2017,couvreur2016polynomial}. These vulnerabilities are founded on the \emph{Schur product} operation and a thorough examination of the dimensions of Schur products for specific subcodes.

\subsection*{Key generation of code-based cryptosystems} The key generation process of the code-based scheme starts with a public code $C_0$ and a decoding algorithm $\Delta_0$ which can efficiently correct a certain number of errors. Then, a random seed $\sigma$ and a procedure $\Pi$ are used to construct a code $C$ with a decoding algorithm $\Delta$.

Roughly speaking, the code $C=\Pi(C_0,\sigma)$ represents the public key, while the decoding algorithm $\Delta=\Pi(\Delta_0,\sigma)$ represents the private key. The class of \emph{random alternant codes,} where the starting code $C_0$ is the full support Reed-Solomon code of dimension $k$ over the field of order $q^m$ serves as an illustration. The random seed consists of a pair of vectors of length $n$ over $\mathbb{F}_{q^m}$: the \emph{multiplier} $\bm{a} = (a_1,\ldots,a_n)$, $(a_i\neq 0$), and the \emph{support} $\bm{x} = (x_1,\ldots,x_n)$, ($x_i\neq x_j$ for $i\neq j$). The process $\Pi$ has two main steps: first, compute the generalized Reed-Solomon code $C_1=\GRS_k(\bm{x},\bm{a})$, then compute the subfield subcode $\mathcal{A}_k(\bm{x},\bm{a})=C_1^\perp\cap \mathbb{F}_q^n$ of the dual of $C_1$. Due to Delsarte's theorem, the second step is equivalent to taking the dual of the trace code: $\mathcal{A}_k(\bm{x},\bm{a})=\tr(C_1)^\perp$. (For more precise definitions and references, see Section \ref{sec:prereq}.)

While binary Goppa codes form a subclass of alternant codes, randomness for binary Goppa codes operates distinctly. One starts with the full support Reed-Solomon code $C_0$, where $q=2$, $k=2t$. The seed consists of the support $\bm{x}$, and the monic irreducible polynomial $g(X)$ of degree $t$ over $\mathbb{F}_{q^m}$. The multiplier $\bm{a}$ is defined by $a_i=1/g(x_i)$, and the result is the alternant code $\mathcal{A}_k(\bm{x},\bm{a})$. In both cases, the scheme's cryptographic strength depends on taking the subfield subcode or, equivalently, taking the trace code. Existing known mathematical techniques have yet failed to grasp the essence of these two operations. In particular, it is difficult to determine the true dimension of subfield subcodes and trace codes in general.

\subsection*{Random trace codes and their dimension}
Subfield subcodes and trace codes are linked by duality. This paper deals with the dimension problem of trace codes. Let $q$ be a prime power, $m,k,n$ positive integers.
We extend the trace map $\tr:\mathbb{F}_{q^m} \to \mathbb{F}_q$ to vectors and matrices. For a linear subspace $C\leq \mathbb{F}_{q^m}^n$, we write $\tr(C)=\{\tr(\bm{x}) \mid \bm{x}\in C\}$. For the linear code $C\leq \mathbb{F}_{q^m}^n$ of dimension $k$, we have the obvious upper bound $\dim(\tr(C))\leq mk$, and we say that $C$ has \emph{maximum trace dimension,} if the equality holds. Assume that the $\mathbb{F}_q$-linear code $C=\Pi(C_0,\sigma)$ is constructed using a $\mathbb{F}_{q^m}$-linear code $C_0$ and a random seed $\sigma$. Then we may inquire about the probability
\[\mathrm{Prob}(\text{$C=\Pi(C_0,\sigma)$ has maximum trace dimension}),\]
a value which depends solely on $C_0$. This probability has already been estimated using numerical experimentation for binary Goppa codes of the classic McEliece scheme (see Sections 2.2.2 and 4.2 of \cite{ClassicMcEliece2020}), and for random alternant codes \cite{Mora2023}.

The focus on this probability is mainly theoretical; however, bounds on the proportion of random alternant codes with maximum trace dimension are beneficial in understanding the complexity of the algorithms used in the key generation process of code-based cryptography, as well as the size of public keys. 

In this paper, we prove a lower bound for the probability of maximum trace dimension in the probability model of random multipliers.

\begin{theorem}\label{thm:defect}
Let $C$ be an $[n,k,d]_{q^m}$-code and let $h=n+1-k-d$ be its Singleton defect. Let $P_C$ denote the proportion of multiplier vectors $\bm{a}=(a_1,\ldots,a_n)\in (\mathbb{F}_{q^m}^*)^n$ such that the linear code
\begin{align*} %
\Ca=\{(a_1x_1,\ldots,a_nx_n)\mid \bm{x}\in C\}
\end{align*}
has maximum trace dimension.  Then
\begin{align} \label{eq:rhsmain}
P_C \geq 1-\frac{1-q^{-m(h+k)}}{(q-1)q^{n-m(h+k)}}.
\end{align}
In particular, if $n\geq m(k+h)$, or equivalently $d\geq n(1-1/m)+1$, then $P_C>0$.
\end{theorem}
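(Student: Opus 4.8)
The plan is to work with the failure event directly. The code $\Ca$ fails to have maximum trace dimension precisely when the $\mathbb{F}_q$-linear map
$C\to\mathbb{F}_q^n$, $\bm x\mapsto(\tr(a_1x_1),\ldots,\tr(a_nx_n))$, whose image is $\tr(\Ca)$, is not injective; that is, when there exists a nonzero $\bm x\in C$ with $\tr(a_jx_j)=0$ for every $j$. Writing $K=\ker(\tr\colon\mathbb{F}_{q^m}\to\mathbb{F}_q)$, these ``bad'' codewords are exactly the preimage of $\Ca\cap K^n$. The first step is to count, for a \emph{fixed} nonzero $\bm x\in C$, the multipliers $\bm a\in(\mathbb{F}_{q^m}^*)^n$ that kill it. Since the conditions $\tr(a_jx_j)=0$ are independent across coordinates and $|K|=q^{m-1}$, a coordinatewise count gives $(q^m-1)^{n-\wt(\bm x)}(q^{m-1}-1)^{\wt(\bm x)}$ such multipliers; dividing by the total $(q^m-1)^n$, the probability that a random $\bm a$ kills $\bm x$ equals $r^{\wt(\bm x)}$ with $r=(q^{m-1}-1)/(q^m-1)$.

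The second step uses linearity to sharpen the union bound by a factor $q-1$. Since $\Ca\cap K^n$ is an $\mathbb{F}_q$-subspace, a bad $\bm a$ kills an entire punctured line, hence at least $q-1$ nonzero codewords. Counting over the $(q^{mk}-1)/(q-1)$ one-dimensional $\mathbb{F}_q$-subspaces of $C$ (on each of which $\wt$ is constant) rather than over individual codewords therefore yields
\[
1-P_C=\mathrm{Prob}(\text{bad})\le\frac{1}{q-1}\sum_{\bm x\in C\setminus\{0\}}r^{\wt(\bm x)}=\frac{1}{q-1}\bigl(W_C(1,r)-1\bigr),
\]
where $W_C(1,r)=\sum_{\bm x\in C}r^{\wt(\bm x)}$ is the weight enumerator of $C$ evaluated at $r$.

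The crux is to bound $W_C(1,r)$ using only $n,k,d$, i.e. the Singleton defect $h=n+1-k-d$. I would expand over supports: grouping codewords by the coordinate set on which they may vanish gives the exact identity $W_C(1,r)=\sum_{S\subseteq[n]}q^{m\dim C(S)}\,r^{|S|}(1-r)^{n-|S|}$, where $C(S)$ is the subcode of $C$ supported on $S$. The distance enters through the Singleton bound on the shortened codes, $\dim C(S)\le\max(0,|S|-d+1)$, and this is exactly where $h$ surfaces, since at $S=[n]$ the exponent is $|S|-d+1=k+h$. Substituting this estimate and invoking the algebraic identity $1+(q^m-1)r=q^{m-1}$ collapses the binomial sum: the tail from $|S|=d$ upward is controlled by $q^{m(1-d)}\bigl(1+(q^m-1)r\bigr)^n=q^{m(k+h)-n}$, which after the $1/(q-1)$ factor matches the right-hand side $\tfrac{q^{m(h+k)}-1}{(q-1)q^n}$.

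I expect the main obstacle to be pinning down the exact constant, namely recovering the $-1$ in the numerator rather than a clean $q^{m(h+k)}$. Extending the truncated binomial sum all the way down to $|S|=0$ in the telescoping step is lossy, so the delicate point is to retain the low-order contribution $\sum_{|S|<d}$ and show it amounts to at least $q^{-n}$ after normalization; small checks (repetition and MDS codes) indicate there is ample slack, so this is a bookkeeping estimate on a partial binomial sum rather than a conceptual difficulty. Finally, the ``in particular'' clause is immediate: when $n\ge m(h+k)$ the exponent $m(h+k)-n$ is $\le0$, whence $q^{m(h+k)}-1<(q-1)q^n$ and the bound is strictly positive, while the equivalence with $d\ge n(1-1/m)+1$ follows by substituting $h+k=n+1-d$.
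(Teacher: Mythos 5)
Your skeleton is sound and is, in substance, the MacWilliams-dual of the paper's argument, carried out in a more self-contained way. The paper computes the exact average $\lambda(C)=\mathbb{E}_{\bm a}\,q^{n-\dim(\tr(\Ca))}$ by double counting pairs $(\bm a,\bm c)$ with $\bm c\in C^\perp$ and invoking Delsarte's theorem, so that $\lambda(C)$ is the weight enumerator of the dual code evaluated at $(q-1)/(q^m-1)$; it then converts this to rank counts $N_G(v,r)$ of generator-matrix submatrices via Proposition 3 and Lemma 4 of \cite{Meneghetti2022}, the Singleton defect entering through the fact that every $k\times(k+h)$ submatrix of $G$ has rank $k$. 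You work on the primal side: your $\mathbb{E}_{\bm a}\,|\{\bm x\in C\setminus\{\bm 0\}:\tr(\bm a\star\bm x)=\bm 0\}| = W_C(1,r)-1$ with $r=(q^{m-1}-1)/(q^m-1)$ equals $q^{mk-n}\lambda(C)-1$ by the MacWilliams identity, so the two averages are the same number, and your $(q-1)$-saving (the killed codewords form an $\mathbb{F}_q$-subspace, so their number is $0$ or at least $q-1$) is exactly Proposition \ref{prop:Plowerbound} in primal language. Likewise your identity $W_C(1,r)=\sum_S q^{m\dim C(S)}r^{|S|}(1-r)^{n-|S|}$ plus Singleton for shortened codes carries the same information as the $N_G(v,r)$ bounds, since the rank of the $k\times v$ submatrix on a column set $S$ is $k-\dim C([n]\setminus S)$. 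What your route buys is a proof that needs neither Delsarte's theorem nor the external results of \cite{Meneghetti2022}; that is a genuine simplification.

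The one genuine flaw is in your anticipated bookkeeping for the exact constant. You propose to recover the $-1$ in the numerator by retaining the head $\sum_{|S|<d}$, which amounts to showing $\sum_{s\ge d}\binom ns r^s(1-r)^{n-s}\ge q^{-n}$. This is false in general: since $r=(q^{m-1}-1)/(q^m-1)<1/q$, for $d=n$ (a one-dimensional MDS code with $k+h=1$ --- one of the very examples you cite as a check) the left side is $r^n<q^{-n}$, and it fails likewise whenever $d$ is close enough to $n$ that the binomial tail $\Pr(\mathrm{Bin}(n,r)\ge d)$ drops below $q^{-n}$. The repair stays entirely inside your framework: instead of keeping the $-1$ as a separate binomial tail, fold it into the tail termwise via $q^{m(s-d+1)}-1\le q^{ms}\bigl(q^{m(1-d)}-q^{-mn}\bigr)$, valid for all $s\le n$. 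Then
\begin{align*}
W_C(1,r)-1 &\le \sum_{s\ge d}\binom ns r^s(1-r)^{n-s}\bigl(q^{m(s-d+1)}-1\bigr)\\
&\le \bigl(q^{m(1-d)}-q^{-mn}\bigr)\sum_{s=0}^n\binom ns (q^m r)^s(1-r)^{n-s}
= \bigl(q^{m(1-d)}-q^{-mn}\bigr)q^{(m-1)n}\\
&= q^{m(h+k)-n}-q^{-n},
\end{align*}
where the middle step uses $1+(q^m-1)r=q^{m-1}$. The right-hand side is exactly $\bigl(q^{m(h+k)}-1\bigr)/q^n$, and dividing by $q-1$ gives \eqref{eq:rhsmain}; your argument for the ``in particular'' clause is correct as written. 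With this one adjustment your proof is complete.
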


Our proof uses double counting methods that involve the weight distribution of the dual code $C^\perp$. We apply recent results by Meneghetti, Pellegrini and Sala \cite{Meneghetti2022} that relate the weight distribution to numerical properties of the code that can be computed if the Singleton defect is small. For our purposes, the most important such property is the number of $k\times v$ submatrices of rank $r$ of the generator matrix.

Except for the case $q=2$ and $n=m(h+k)$, Theorem \ref{thm:defect} implies $P_C\geq 1/2$. This means that the Monte Carlo method of generating a random code $\Ca$ of maximum trace dimension is very effective. For $q=2$ and $n=m(h+k)$, further research is necessary.

\subsection*{Maximum trace dimension probabilities of AG codes}
Algebraic geometry (AG) codes are linear error-correcting codes constructed from algebraic curves over finite fields, generalizing the Reed-Solomon code concept. They are defined by \emph{evaluating functions} or by using \emph{residues of differentials}. Their parameters can be derived from well-known theorems in algebraic geometry. Our notation and terminology on algebraic plane curves over finite fields, their function fields, divisors, and Riemann-Roch spaces are conventional; see, for example, \cite{stichtenoth2009algebraic}.

Let $\mathcal{X}$ be a smooth algebraic curve over the finite field $\mathbb{F}_{q^m}$. Let $P_1, P_2, \cdots,P_n$ be pairwise distinct places of $\mathcal{X}$, and $D$ is the divisor $D=P_1+...+P_n$. Let $G$ be another divisor with support disjoint from $D$. The Riemann-Roch theorem enables us to estimate the dimension, the minimum distance and the Singleton defect of AG codes. These, together with Theorem \ref{thm:defect} imply a lower bound for the probability of maximum trace dimension of AG codes.
\begin{theorem} \label{thm:agcodes}
Let $C=C_L(D,G)$ be a functional AG code of length $n=\deg(D)$ over the finite field $\mathbb{F}_{q^m}$, $m>1$. If $\deg(G)\leq n/m-1$, then
\begin{align} \label{eq:rhsag}
P_C \geq 1-\frac{1-q^{-m(\deg(G)+1)}}{(q-1)q^{n-m(\deg(G)+1)}}.
\end{align}
\end{theorem}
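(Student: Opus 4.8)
The plan is to obtain Theorem~\ref{thm:agcodes} as a corollary of Theorem~\ref{thm:defect}, feeding in the standard Riemann--Roch parameters of the functional code $C_L(D,G)$. The key observation is that the right-hand side of \eqref{eq:rhsmain} depends on the code $C$ only through the single integer $s=h+k$, where $h$ is the Singleton defect and $k$ the dimension. By the very definition $h=n+1-k-d$, this quantity is
\[
s=h+k=n+1-d,
\]
so it is governed entirely by the minimum distance $d$. The first step is therefore to translate the geometric hypothesis $\deg(G)\le n/m-1$ into an estimate for $s$.

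For the second step I would invoke the Riemann--Roch machinery exactly as indicated before the statement. Since $m>1$, the hypothesis gives $\deg(G)\le n/m-1<n$, so the evaluation map defining $C_L(D,G)$ is injective and the Goppa bound applies: a nonzero codeword $\mathrm{ev}(f)$ comes from a function $f$ in the Riemann--Roch space of $G$, which can vanish at most $\deg(G)$ of the $n$ places of $D$, whence $d\ge n-\deg(G)$. Combining this with the displayed identity yields the clean inequality
\[
s=h+k=n+1-d\le \deg(G)+1.
\]
(Riemann--Roch also gives the individual values of $k$ and $d$, but by the identity above only $d$, through $s=n+1-d$, enters the final bound, so the genus never needs to be pinned down.)

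The third step is a monotonicity argument. Rewriting the right-hand side of \eqref{eq:rhsmain} as a function of $s$,
\[
f(s)=1-\frac{1-q^{-ms}}{(q-1)\,q^{\,n-ms}}=1-\frac{q^{ms}-1}{(q-1)\,q^{n}},
\]
shows that the subtracted term is strictly increasing in $s$, so $f$ is decreasing. Because $s=h+k\le\deg(G)+1$, Theorem~\ref{thm:defect} together with this monotonicity gives
\[
P_C\ge f(h+k)\ge f(\deg(G)+1),
\]
and the last quantity is precisely the right-hand side of \eqref{eq:rhsag}. Finally, $\deg(G)\le n/m-1$ is equivalent to $n\ge m(\deg(G)+1)\ge m(h+k)$, so the positivity clause of Theorem~\ref{thm:defect} is in force and the bound is non-trivial.

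I do not expect a genuine obstacle: the statement is a specialization of Theorem~\ref{thm:defect}. The only points demanding care are getting the direction of the monotonicity right --- one replaces the exact value $h+k$ by the larger quantity $\deg(G)+1$, which \emph{weakens} the bound in the correct direction --- and checking that the Goppa minimum-distance estimate is valid, i.e.\ that $\deg(G)<n$, which is guaranteed by $m>1$.
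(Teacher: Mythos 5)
Your proposal is correct and follows essentially the same route as the paper's own proof: bound $h+k=n+1-d\leq n+1-\delta_L=\deg(G)+1$ via the designed minimum distance, then apply Theorem~\ref{thm:defect} together with the monotonicity of the right-hand side of \eqref{eq:rhsmain} in $h+k$. You merely spell out two steps the paper leaves implicit --- the verification that the subtracted term equals $\bigl(q^{m(h+k)}-1\bigr)/\bigl((q-1)q^{n}\bigr)$ and is increasing in $h+k$, and the justification of the Goppa bound --- which is fine.
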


\subsection*{Rank properties of random matrices in other probabilistic models}
The rank properties of random matrices over finite fields have been extensively studied as a problem in combinatorial graph theory and other contexts, including coding theory and code-based post-quantum cryptography.  For the probabilistic paradigm, there are a variety of alternatives \cite{Cooper2000,Cooper2000a,salmond2016rank,Studholme2006,Studholme2010}. One possibility is to choose each entry of the matrix independently and uniformly at random from the field. This can be extended to non-uniform distributions, which may or may not depend on the matrix entry's position. Studholme and Blake \cite{Studholme2010} studied \emph{windowed random matrices,} where the nonzero elements of each column are restricted to fall within a window of length $w$, beginning at a randomly chosen row. In \cite{salmond2016rank}, the authors prove that the probability that a random matrix has full rank cannot increase if we fix any number of additional elements to be identically zero.

Let $A$ be an $n\times n$ matrix over the finite field $\mathbb{F}_q$, whose entries are chosen uniformly at random. As $n\to \infty$, the probability that $A$ has rank $n$ converges very fast to the value
\begin{align} %
S(q)=\prod_{i=1}^{\infty}\left(1-\frac{1}{{q^i}}\right).
\end{align}
$S(q)$, which is independent of $n$, is also called the $q$-Pochhamer symbol $(1/q;1/q)_\infty$, see \cite{enwiki:1109461763}. For $q=2$, a good estimate for $S(2)$ is $0.2888$. Let $V$ be an $\mathbb{F}_q$-space of dimension $n$, and take $n$ nonzero vectors uniformly at random from $V\setminus\{0\}$. The probability that the vectors are linearly independent also converges to $S(q)$ very fast if $n\to\infty$.

We performed numerical experiments for Reed-Solomon codes $C=\RS_k(\bm{x})$ over $\mathbb{F}_{q^m}$, where $k,m$ are positive integers, $q=2$ or $q=3$, and $x_1,\ldots,x_{km}$ are random distinct elements of $\mathbb{F}_{q^m}$. Therefore, $C$ has length $n=km$, and Singleton defect $h=0$. We observed that the probability that $C$ has maximum trace dimension is near to the value $S(q)$.

\subsection*{Outline of the article}

Notation and classical prerequisites on linear codes are given in Section \ref{sec:prereq}. Section \ref{sec:maxtdim} collects basic properties and examples of codes which have maximum trace dimension. In Section \ref{sec:randalt} we deal with the dimension problem of random alternant codes. Sections \ref{sec:proof1} and \ref{sec:proof2} contain detailed proofs of the main theorems. The basic concepts of AG codes are also presented in Section \ref{sec:proof2}.

\section{Prerequisites from coding theory} \label{sec:prereq}

Let $q$ be a prime power and let $m,n,k$ be positive integers such that $mk\leq n\leq q^m$. Let $x_1,\ldots,x_n$ be distinct elements of $\mathbb{F}_{q^m}$. The \emph{Reed-Solomon code} $\RS_k(\bm{x})$ is defined by the generator matrix
\begin{align} \label{eq:RSgenmat}
G=\begin{bmatrix}
1&1&\cdots&1\\
x_1&x_2&\cdots&x_n\\
\vdots&\vdots&&\vdots\\
x_1^{k-1}&x_2^{k-1}&\cdots&x_n^{k-1}
\end{bmatrix}.
\end{align}
The vector $\bm{x}$ is called the \emph{support} of the Reed-Solomon code. $\RS_k(\bm{x})$ has dimension $k$ and minimum distance $d=n-k+1$. It is a \emph{maximum distance separable} (MDS) code with Singleton defect $h=0$. Let $a_1,\ldots,a_n$ be nonzero elements of $\mathbb{F}_{q^m}$. The \emph{generalized Reed-Solomon code} $\GRS_k(\bm{x},\bm{a})$ has generator matrix
\[ G'=\begin{bmatrix}
a_1&a_2&\cdots&a_n\\
a_1x_1&a_2x_2&\cdots&a_nx_n\\
\vdots&\vdots&&\vdots\\
a_1x_1^{k-1}&a_2x_2^{k-1}&\cdots&a_nx_n^{k-1}
\end{bmatrix}. \]
Clearly, $\GRS_k(\bm{x},\bm{a})$ and $\RS_k(\bm{x})$ have same parameters. In particular, generalized Reed-Solomon codes are MDS. If $n=q^m$, then $\{x_1,\ldots,x_n\}=\mathbb{F}_{q^m}$ and the codes are said to have \emph{full support.} The dual code of $\GRS_k(\bm{x},\bm{a})$ is again a generalized Reed-Solomon code $\GRS_{n-k}(\bm{x},\bm{b})$, with the same support $\bm{x}$. The Berlekamp-Massey algorithm provides an efficient decoding algorithm for Reed-Solomon codes, which can correct up to $\lfloor \frac{d-1}{2} \rfloor=\lfloor \frac{n-k}{2} \rfloor$ errors. If the multiplier vector is given, then this algorithm can also be used to decode generalized Reed-Solomon codes.

Let $C$ be a linear code of length $n$, dimension $k$ and minimum distance $d$, defined over the finite field $\mathbb{F}_{q^m}$. The \emph{subfield subcode} or \emph{restricted code} of $C$ is
\[ C|_{\mathbb{F}_q} = C\cap \mathbb{F}_q^n.\]
We extend the trace map $\tr:\mathbb{F}_{q^m} \to \mathbb{F}_q$ to vectors and matrices entry-wise. We define the \emph{trace code} of the linear $C\leq \mathbb{F}_{q^m}^n$ by
\[ \tr(C)=\{\tr(\bm{x}) \mid \bm{x}\in C\}. \]
Clearly, $\tr(C)$ is an $\mathbb{F}_q$-linear code of length $n$. Let $\bm{x}_1,\ldots,\bm{x}_k$ be a basis of $C$, and let $\beta_1,\ldots,\beta_m$ be a basis of $\mathbb{F}_{q^m}$ over $\mathbb{F}_q$. Then the vectors $\tr(\beta_i \bm{x}_j)$, ($1\leq i \leq m$, $1\leq j \leq k$) span the trace code $\tr(C)$. This implies the obvious upper bound $\dim(\tr(C))\leq km$ for the dimension of the trace code. We say that $C$ has \emph{maximum trace dimension,} if
\[ \dim_{\mathbb{F}_q}(\tr(C)) = m\dim_{\mathbb{F}_{q^m}} (C).\]
According to Delsarte's theorem \cite{Del75},
\[ (\tr(C))^\perp = (C^\perp)|_{\mathbb{F}_q}, \]
which shows that subfield subcodes and trace codes are basically dual objects. This yields the obvious lower bound
\[ \dim (C|_{\mathbb{F}_q}) \geq n-m(n-k)\]
for the dimension of the subfield subcode. The minimum distance of $C|_{\mathbb{F}_q}$ is at least the minimum distance of $C$. Moreover, subfield subcodes inherit the decoding algorithms of their parent code.

An \emph{alternant code} is defined as the subfield subcode of a generalized Reed-Solomon code
\[ \mathcal{A}_k(\bm{x},\bm{a}) = (\GRS_k(\bm{x},\bm{a})^\perp)|_{\mathbb{F}_q},\]
or equivalently, as the dual code of the trace code of a generalized Reed-Solomon code:
\[ \mathcal{A}_k(\bm{x},\bm{a}) = \tr(\GRS_{k}(\bm{x},\bm{a}))^\perp.\]
The integer $k$ is referred to as the \emph{degree} of the alternant code, and $m$ as its \emph{extension degree.} The vector $\bm{x}$ is the \emph{support,} the vector $\bm{a}$ is the \emph{multiplier} of the alternant code. In the sequel, even without explicitly saying it, we assume that the entries of the support vector are distinct, and the entries of the multiplier vector are different from zero.

The obvious lower bound for the dimension of the alternant code is
\begin{align*}
\dim(\mathcal{A}_k(\bm{x},\bm{a}))\geq n-mk.
\end{align*}
Given the support and the multiplier, the Berlekamp-Massey algorithm can correct up to $\lfloor \frac{k}{2} \rfloor$ errors for the alternant code $\mathcal{A}_k(\bm{x},\bm{a})$.

\section{The maximum trace dimension property} \label{sec:maxtdim}

In this section, we prove a collection of properties of codes having the maximum trace dimension. At the end of the section, we present a class of examples which shows that Theorem \ref{thm:defect} is close to being sharp asymptotically.

In the sequel, $C$ denotes an $\mathbb{F}_{q^m}$-linear code of length $n$, dimension $k$ and minimum distance $d$. 

\begin{definition}
Define the \emph{support} of $C$ as 
\begin{align} %
\supp(C)=\{i\in\{1,2,\dots,n\}\mid\exists \bm{x}\in C: x_i\neq 0\}.
\end{align}
For an integer $i$ define
\begin{align} %
d_i(C)=\min_{\substack{D\leq C\\ \dim(D)=i}}|\supp(D)|.
\end{align}
\end{definition}

Note that $d_1(C)=d$ is the minimum distance. Clearly, $\supp(C)=\supp(\Ca)$ and $d_i(C)=d_i(\Ca)$ for each multiplier vector $\bm{a}$. Furthermore, $\dim(C)\leq |\supp(C)|$. 

The proofs of the following lemmas are straightforward consequences of the definitions. 

\begin{lemma}
The following are equivalent:
\begin{enumerate}[(i)]
\item The code $C$ has maximum trace dimension.
\item All $\mathbb{F}_{q^m}$-linear subspaces of $C$ have maximum trace dimension.
\item For all $\bm{x}\in C\setminus\{\bm{0}\}$, $\tr(\bm{x})\neq \bm{0}$.
\item $C\cap K=\{\bm{0}\}$, where $K$ is the kernel of the trace map $\tr:\mathbb{F}_{q^m}^n\to \mathbb{F}_q^{n}$.
\end{enumerate}
\end{lemma}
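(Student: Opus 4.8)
The plan is to reduce everything to a single rank--nullity computation over $\mathbb{F}_q$. The crucial observation is that although the trace map $\tr:\mathbb{F}_{q^m}^n\to\mathbb{F}_q^n$ is not $\mathbb{F}_{q^m}$-linear, it is $\mathbb{F}_q$-linear, so I would view $C$ as an $\mathbb{F}_q$-vector space. Since $C$ has dimension $k$ over $\mathbb{F}_{q^m}$ and $[\mathbb{F}_{q^m}:\mathbb{F}_q]=m$, we have $\dim_{\mathbb{F}_q}(C)=mk$. Applying rank--nullity to the restriction $\tr|_C$ then gives
\[
\dim_{\mathbb{F}_q}(\tr(C))=mk-\dim_{\mathbb{F}_q}(C\cap K),
\]
where $K=\ker(\tr)$ as in item (iv).

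This identity does essentially all the work. It shows at once that $C$ attains the upper bound $\dim_{\mathbb{F}_q}(\tr(C))=mk$ if and only if $C\cap K=\{\bm{0}\}$, which is the equivalence (i)$\Leftrightarrow$(iv). The equivalence (iii)$\Leftrightarrow$(iv) I would treat as a mere reformulation: by definition $K$ is exactly the set of vectors of $\mathbb{F}_{q^m}^n$ whose trace is $\bm{0}$, so the condition $C\cap K=\{\bm{0}\}$ says precisely that no nonzero $\bm{x}\in C$ satisfies $\tr(\bm{x})=\bm{0}$.

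It remains to fold in the subspaces. The implication (ii)$\Rightarrow$(i) is trivial, since $C$ is a subspace of itself. For (i)$\Rightarrow$(ii) I would note that the rank--nullity identity above holds verbatim for any $\mathbb{F}_{q^m}$-subspace $D\leq C$ of dimension $j$, giving $\dim_{\mathbb{F}_q}(\tr(D))=mj-\dim_{\mathbb{F}_q}(D\cap K)$; hence $D$ has maximum trace dimension if and only if $D\cap K=\{\bm{0}\}$. If (i) holds then $C\cap K=\{\bm{0}\}$, whence $D\cap K\subseteq C\cap K=\{\bm{0}\}$ for every such $D$, and each $D$ has maximum trace dimension.

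There is no serious obstacle here --- the only point requiring care is to keep the two field structures apart, remembering that the trace is $\mathbb{F}_q$-linear but not $\mathbb{F}_{q^m}$-linear, so that all dimension counts in the rank--nullity step must be taken over $\mathbb{F}_q$. Once that is observed, the four conditions collapse into the single statement $C\cap K=\{\bm{0}\}$.
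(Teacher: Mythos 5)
Your proof is correct and complete: the single rank--nullity identity $\dim_{\mathbb{F}_q}(\tr(C))=mk-\dim_{\mathbb{F}_q}(C\cap K)$ over $\mathbb{F}_q$ does indeed establish all four equivalences, with the only subtlety (that $\tr$ is $\mathbb{F}_q$-linear but not $\mathbb{F}_{q^m}$-linear) correctly flagged. The paper omits the proof entirely, stating only that the lemma is a straightforward consequence of the definitions, and your argument is exactly the routine verification it has in mind.
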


\begin{lemma} \label{lm:conj}
Assume that for some multiplier vector $\bm{a}$, $\Ca$ has maximum trace dimension. Then we have $d_i(C)\geq im$ for all $1\leq i\leq k$.
\end{lemma}
\begin{proof}
If $D\leq C$, $\dim(D)=i$ such that $|\supp(D)|<im$, then 
\[supp(\tr(D_{\bm{a}}))\subseteq\supp(D_{\bm{a}})=\supp(D) \]
and
\begin{align*} %
\dim(\tr(D_{\bm{a}})) &\leq |\supp(\tr(D_{\bm{a}}))| = |\supp(\tr(D))| \\
&<im = m\dim(D) =m\dim(D_{\bm{a}}).
\end{align*}
Therefore, $D_{\bm{a}}$ and $\Ca$ have no maximum trace dimension.
\qed\end{proof}
We conjecture that the converse of Lemma \ref{lm:conj} holds as well.

As the following examples show, the proportion of multiplier vectors producing a maximum trace dimension code is related to the probability of a random matrix to have full rank. Let $A$ be an $n\times n$ matrix whose entries are chosen from $\mathbb{F}_q$ uniformly at random. The probability for $A$ to have maximum rank $n$ is
\begin{align} %
S_1(n,q)=\prod_{i=1}^{n}\left(1-\frac{1}{{q^i}}\right).
\end{align}
As $n\to \infty$, $S_1(n,q)$ converges very fast to the value
\begin{align} %
S(q)=\prod_{i=1}^{\infty}\left(1-\frac{1}{{q^i}}\right).
\end{align}
Let $V$ be an $\mathbb{F}_q$-space of dimension $n$, and take $n$ nonzero vectors uniformly at random from $V\setminus\{0\}$. The probability for the vectors to be linearly independent is
\begin{align} %
S_2(n,q)=\prod_{j=0}^{n-1}\frac{q^n-q^j}{q^n-1} = \prod_{i=1}^{n}\left(1-\frac{1}{{q^i}}\right) \left(1+\frac{1}{q^n-1}\right)^n.
\end{align}
As the last factor converges to $1$ very fast, $S_2(n,q)\to S(q)$ very fast, see Figures \ref{fig:qis2} and \ref{fig:qis3}. In fact, if $n>20$, then $S(q)$ is a good practical approximation for $S_1(n,q)$ and $S_2(n,q)$.

\begin{figure}
\setlength{\tabcolsep}{5pt} \centering
\caption{$q=2$} \label{fig:qis2}
\begin{tabular}{llll}
$n$ & $S_1(n,q)$ & $S_2(n,q)$ & $S_1(n,q)-S(q)$\\ \hline
5 & 0.298004150390625 & 0.349271971075915 & 0.00921605530402259 \\
10 & 0.289070298419749 & 0.291908472309700 & 0.000282203333146547 \\
15 & 0.288796908379162 & 0.288929141393520 & 8.81329255975061e-6 \\
20 & 0.288788370496567 & 0.288793878752760 & 2.75409964223261e-7 \\
25 & 0.288788103693158 & 0.288788318857146 & 8.60655607892724e-9 \\
30 & 0.288788095355557 & 0.288788103424204 & 2.68954858384518e-10 \\
35 & 0.288788095095007 & 0.288788095389177 & 8.40483238562229e-12 \\
40 & 0.288788095086865 & 0.288788095097371 & 2.62623256475081e-13 \\
45 & 0.288788095086611 & 0.288788095086980 & 8.16013923099490e-15 \\
50 & 0.288788095086603 & 0.288788095086615 & 2.22044604925031e-16
\end{tabular}
\end{figure}

\begin{figure}
\setlength{\tabcolsep}{5pt} \centering
\caption{$q=3$} \label{fig:qis3}
\begin{tabular}{llll}
$n$ & $S_1(n,q)$ & $S_2(n,q)$ & $S_1(n,q)-S(q)$\\ \hline
5 & 0.561280381843718 & 0.572973321315295 & 0.00115430391576976 \\
10 & 0.560130820850226 & 0.560225688332595 & 4.74292227792272e-6 \\
15 & 0.560126097446024 & 0.560126682988612 & 1.95180757112112e-8 \\
20 & 0.560126078008270 & 0.560126081221122 & 8.03216382294636e-11 \\
25 & 0.560126077928279 & 0.560126077944806 & 3.30735439035834e-13 \\
30 & 0.560126077927950 & 0.560126077928031 & 1.44328993201270e-15 \\
\end{tabular}
\end{figure}

\begin{lemma}
Let $C$ be the $m$-fold repetition code over $\mathbb{F}_{q^m}$. The probability that $\Ca$ has maximum trace dimension for a random multiplier vector $\bm{a}$ is $S_2(m,q)$. In practice, if $m\geq 20$, then $S(q)$ is a good approximation for this probability.
\end{lemma}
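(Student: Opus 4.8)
The plan is to set up the $m$-fold repetition code explicitly and reduce the maximum trace dimension property to a rank condition on an $m\times m$ matrix whose rows involve a trace-basis. The $m$-fold repetition code over $\mathbb{F}_{q^m}$ has length $n=m$, dimension $k=1$, and is generated by the all-ones vector $\bm{1}=(1,\ldots,1)$. Thus $C=\{(\lambda,\ldots,\lambda)\mid \lambda\in\mathbb{F}_{q^m}\}$, and for a multiplier vector $\bm{a}=(a_1,\ldots,a_m)$ with $a_i\neq 0$, the twisted code is $\Ca=\{(a_1\lambda,\ldots,a_m\lambda)\mid \lambda\in\mathbb{F}_{q^m}\}$, again of dimension $1$. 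By the equivalence in the first lemma, $\Ca$ has maximum trace dimension (here $\dim_{\mathbb{F}_q}(\tr(\Ca))=m\cdot 1=m$) if and only if $\tr(a_1\lambda,\ldots,a_m\lambda)\neq\bm{0}$ for every $\lambda\neq 0$; equivalently, the $\mathbb{F}_q$-linear map $\lambda\mapsto\tr(a_1\lambda,\ldots,a_m\lambda)$ from $\mathbb{F}_{q^m}$ to $\mathbb{F}_q^m$ is injective.

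Next I would fix an $\mathbb{F}_q$-basis $\beta_1,\ldots,\beta_m$ of $\mathbb{F}_{q^m}$ and write $\lambda=\sum_j c_j\beta_j$ with $c_j\in\mathbb{F}_q$. Then the $i$-th coordinate of $\tr(a_i\lambda)$ becomes $\sum_j c_j\,\tr(a_i\beta_j)$, so the map above is represented by the $m\times m$ matrix $M=(\tr(a_i\beta_j))_{i,j}$ over $\mathbb{F}_q$. Injectivity of the map is exactly the condition that $M$ is invertible, i.e.\ has full rank $m$. The key observation is that the $i$-th row of $M$ is the coordinate vector (with respect to the dual basis, or simply as an $\mathbb{F}_q^m$-vector via the trace form) of the element $a_i\in\mathbb{F}_{q^m}$. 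Since the trace bilinear form $(x,y)\mapsto\tr(xy)$ is nondegenerate, the correspondence $a\mapsto(\tr(a\beta_1),\ldots,\tr(a\beta_m))$ is an $\mathbb{F}_q$-linear isomorphism $\mathbb{F}_{q^m}\to\mathbb{F}_q^m$. Therefore the rows of $M$ are linearly independent over $\mathbb{F}_q$ if and only if $a_1,\ldots,a_m$ are linearly independent elements of $\mathbb{F}_{q^m}$ viewed as an $\mathbb{F}_q$-space.

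The final step is the probability count. Choosing $\bm{a}$ uniformly at random from $(\mathbb{F}_{q^m}^*)^m$ means choosing $m$ nonzero elements $a_1,\ldots,a_m$ independently and uniformly from $\mathbb{F}_{q^m}\setminus\{0\}$. Under the isomorphism just described, this is precisely the experiment of drawing $m$ nonzero vectors uniformly at random from the $m$-dimensional $\mathbb{F}_q$-space $V=\mathbb{F}_q^m$ and asking whether they are linearly independent. By definition, that probability is $S_2(m,q)$, which establishes the claim; the approximation statement for $m\geq 20$ then follows from the convergence $S_2(m,q)\to S(q)$ recorded above. I do not expect a genuine obstacle here: the only point requiring care is the explicit identification of the nondegeneracy of the trace form with the isomorphism $a\mapsto(\tr(a\beta_j))_j$, so that ``random nonzero multiplier'' translates cleanly into ``random nonzero vector,'' and that one row of $M$ equals the coordinate vector of one multiplier entry. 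Once this dictionary is in place, the counting is immediate from the definition of $S_2(m,q)$.
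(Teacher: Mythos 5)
Your proof is correct. The paper itself gives no written proof of this lemma (it is presented as a straightforward consequence of the definitions), and your argument---reducing maximum trace dimension of the one-dimensional code $\Ca$ to invertibility of the matrix $\bigl(\tr(a_i\beta_j)\bigr)_{i,j}$, then using nondegeneracy of the trace form to identify the random nonzero multipliers $a_1,\ldots,a_m$ with $m$ independent uniform nonzero vectors in an $m$-dimensional $\mathbb{F}_q$-space---is precisely the intended one, matching the paper's definition of $S_2(m,q)$ as the probability that such vectors are linearly independent.
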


Let $C_i$ be linear $[n_i,k_i]_{q^m}$-codes, $i=1,2$. Their sum $C_1+C_2$ is a linear $[n_1+n_2,k_1+k_2]_{q^m}$-code whose codewords are $(\bm{x}_1,\bm{x}_2)$ with $\bm{x}_i\in C_i$. The minimum distance of the sum is $d(C_1+C_2)=\min(d(C_1),d(C_2))$.

\begin{lemma}
Let $C,C'$ be $\mathbb{F}_{q^m}$-linear codes, and $D=C+C'$ their sum. Then $P_D=P_CP_{C'}$, where $P_C$ is as defined in Theorem \ref{thm:defect}.
\end{lemma}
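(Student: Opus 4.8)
The plan is to exploit that the sum $D=C+C'$ lives on two disjoint blocks of coordinates, so that the code, its multiplied version, and its trace code all split as direct products along these blocks. Write $n_1,n_2$ for the lengths and $k_1=\dim(C)$, $k_2=\dim(C')$. A multiplier vector $\bm{a}\in(\mathbb{F}_{q^m}^*)^{n_1+n_2}$ splits uniquely as $\bm{a}=(\bm{a}_1,\bm{a}_2)$ with $\bm{a}_1\in(\mathbb{F}_{q^m}^*)^{n_1}$ and $\bm{a}_2\in(\mathbb{F}_{q^m}^*)^{n_2}$, and this gives a bijection $(\mathbb{F}_{q^m}^*)^{n_1+n_2}\cong(\mathbb{F}_{q^m}^*)^{n_1}\times(\mathbb{F}_{q^m}^*)^{n_2}$ that carries the uniform distribution to the product of the two uniform distributions. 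Under it, $\Ca=C_{\bm{a}_1}\oplus C'_{\bm{a}_2}$, meaning that the codewords of $\Ca$ are exactly the concatenations $(\bm{y}_1,\bm{y}_2)$ with $\bm{y}_1\in C_{\bm{a}_1}$ and $\bm{y}_2\in C'_{\bm{a}_2}$.

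Next I would show that the trace code factors in the same way. Since the trace map acts coordinate-wise, the two blocks are disjoint, and $\bm{x}_1\in C$, $\bm{x}_2\in C'$ range independently, one gets $\tr(\Ca)=\tr(C_{\bm{a}_1})\oplus\tr(C'_{\bm{a}_2})$ as $\mathbb{F}_q$-subspaces of $\mathbb{F}_q^{n_1}\times\mathbb{F}_q^{n_2}$, whence $\dim_{\mathbb{F}_q}\tr(\Ca)=\dim_{\mathbb{F}_q}\tr(C_{\bm{a}_1})+\dim_{\mathbb{F}_q}\tr(C'_{\bm{a}_2})$. Combining this with the universal bounds $\dim_{\mathbb{F}_q}\tr(C_{\bm{a}_1})\le mk_1$ and $\dim_{\mathbb{F}_q}\tr(C'_{\bm{a}_2})\le mk_2$ together with $\dim(\Ca)=k_1+k_2$, the equality $\dim_{\mathbb{F}_q}\tr(\Ca)=m(k_1+k_2)$ holds if and only if both individual bounds are simultaneously tight. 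Therefore $\Ca$ has maximum trace dimension exactly when both $C_{\bm{a}_1}$ and $C'_{\bm{a}_2}$ do.

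Finally I would translate this into proportions. By the previous step the set of favourable $\bm{a}$ corresponds, under the bijection above, to the product of the favourable sets for $\bm{a}_1$ and for $\bm{a}_2$; hence its cardinality is the product of the two cardinalities. Dividing by $(q^m-1)^{n_1+n_2}=(q^m-1)^{n_1}(q^m-1)^{n_2}$ yields $P_D=P_C\,P_{C'}$. Equivalently, on the product probability space the event for $C$ and the event for $C'$ depend on disjoint coordinate blocks of $\bm{a}$, so they are independent and the probabilities multiply.

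The only step that requires care is the factorization of the trace code, namely confirming that letting $\bm{x}_1$ and $\bm{x}_2$ vary independently really produces the full direct sum $\tr(C_{\bm{a}_1})\oplus\tr(C'_{\bm{a}_2})$ rather than a proper subspace. This is exactly the independence of the two coordinate blocks built into the definition of $D=C+C'$, so it is a matter of stating the decomposition cleanly; everything else is bookkeeping with dimensions and a counting argument.
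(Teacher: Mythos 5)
Your proof is correct and is essentially the argument the paper has in mind: the paper states this lemma without proof (grouping it among the ``straightforward consequences of the definitions''), and the intended justification is exactly your block decomposition $\bm{a}=(\bm{a}_1,\bm{a}_2)$, the factorization $\tr(D_{\bm{a}})=\tr(C_{\bm{a}_1})\oplus\tr(C'_{\bm{a}_2})$ on disjoint coordinate blocks, and the resulting independence of the two favourable events. Your one flagged subtlety is indeed harmless, since $D_{\bm{a}}$ is the full direct product $C_{\bm{a}_1}\times C'_{\bm{a}_2}$, so the traces of the two blocks range independently by definition.
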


Let $C$ be the $k$-fold sum of the $m$-fold repetition code. Clearly, $C$ has length $n=mk$, dimension $k$ and minimum distance $d=m$. The last two lemmas imply that the proportion $P_C$ of multiplier vectors with maximum trace dimension is approximately $P_C \approx S(q)^k$, which tends to zero if $k\to \infty$. In particular, we cannot expect $P_C$ to be close to $1$ just because $k$ and $m$ are large. However, $P_C>0$, so there is a multiplier $\bm a$ such that $C_{\bm a}$ has the maximum trace dimension. On the other hand, $d_i(C)=im$ for all $1\leq i\leq k$, showing that Lemma \ref{lm:conj} is sharp.

Clearly, if $n<mk$ then $P_C=0$. In Theorem \ref{thm:defect} we see that $n\geq m(h+k)$ implies $P_C>0$. The question whether $P_C$ is zero or not is open for the interval $[mk,m(h+k)-1]$. The following class of examples has Singleton defect $h\approx \log_q(k)$, hence the interval is small. Still, the condition $n=mk$ is not enough to ensure $P_C>0$. In other words, Theorem \ref{thm:defect} is close to being sharp asymptotically.

\begin{proposition}
For all prime power $q$ and integers $m>2$, $2\leq k \leq q^m/m$ there exists an $\mathbb{F}_{q^m}$-linear code $C'=C'(q,m,k)$ of length $n=mk$, dimension $k$ and Singleton bound $h=m$, such that $P_{C'}=0$.
\end{proposition}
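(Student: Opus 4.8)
The plan is to exhibit a code $C'$ whose generalized Hamming weight $d_{k-1}(C')$ lies strictly below $(k-1)m$, so that the contrapositive of Lemma \ref{lm:conj}, applied with $i=k-1$, immediately forces $P_{C'}=0$. First I would record the arithmetic that dictates the target. A code of length $n=mk$, dimension $k$ and Singleton defect $h=m$ must have minimum distance $d=n+1-k-h=(m-1)(k-1)$, and the standard monotonicity bound $d_i(C')\ge d+i-1$ shows that $i=k-1$ is the only index for which $d_i<im$ is arithmetically possible: indeed $d+i-1<im$ is equivalent to $(m-1)(k-i)<h=m$, which for integers forces $k-i=1$. The corresponding target value is $d_{k-1}=d+(k-2)=(k-1)m-1$. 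Hence it suffices to construct a code of these parameters that contains a $(k-1)$-dimensional subcode supported on exactly $(k-1)m-1$ coordinates.

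For the construction I would split the $mk$ coordinates into a set $S$ of size $(k-1)m-1$ and a set $T$ of size $m+1$. Choose distinct $x_1,\dots,x_{(k-1)m-1}\in\mathbb{F}_{q^m}$, which is possible since $(k-1)m-1<mk\le q^m$, and let $D$ be the copy of $\RS_{k-1}(\bm x)$ placed on $S$ and identically zero on $T$. As $\RS_{k-1}(\bm x)$ is MDS with full support, $D$ has dimension $k-1$, minimum distance $(k-1)(m-1)=d$, and $\supp(D)=S$, so $|\supp(D)|=(k-1)m-1$. Now adjoin one generator $\bm g$ that on $S$ equals the evaluation vector of $X^{k-1}$ at $\bm x$ and on $T$ equals the all-ones vector, and set $C'=D+\langle\bm g\rangle$. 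On $S$ the code $C'$ restricts to $\RS_k(\bm x)$, which is exactly where the hypothesis $m>2$ is used: it guarantees $(k-1)m-1\ge k$, so that an $\RS_k$ code of this length is defined. By construction $\bm g\notin D$, whence $\dim C'=k$; the length is $(k-1)m-1+(m+1)=mk$; and since $\bm g$ is nonzero throughout $T$, the code has full support.

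The only point requiring genuine care is that the minimum distance of $C'$ equals $d$ exactly, so that the defect is precisely $m$. For a codeword $\bm u+\lambda\bm g$ with $\lambda=0$ we are inside $D$, whose nonzero words have weight at least $d$, with equality attained; this already gives $d(C')\le d$. For $\lambda\ne 0$, the restriction to $S$ is the evaluation of a polynomial of degree exactly $k-1$, hence has at most $k-1$ zeros among the $(k-1)m-1$ points of $S$, so its weight is at least $(k-1)m-1-(k-1)=d-1$, while the restriction to $T$ equals $\lambda\,\bm g|_T\ne\bm 0$ and contributes at least $1$; the total weight is therefore at least $d$. Consequently $d(C')=d=(m-1)(k-1)$ and the Singleton defect is $h=n+1-k-d=m$, as required. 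I expect this verification of exactness to be the main obstacle, and the Reed–Solomon nesting $\RS_{k-1}(\bm x)\subset\RS_k(\bm x)$ together with the degree-versus-zeros count is what resolves it cleanly.

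To conclude, the subcode $D\le C'$ has dimension $k-1$ and $|\supp(D)|=(k-1)m-1$, so that $d_{k-1}(C')\le (k-1)m-1<(k-1)m$. By the contrapositive of Lemma \ref{lm:conj} no multiplier vector $\bm a$ can make $C'_{\bm a}$ of maximum trace dimension, that is, $P_{C'}=0$ in the sense of Theorem \ref{thm:defect}. This produces, for every prime power $q$, every $m>2$ and every $2\le k\le q^m/m$, a code with the asserted parameters and $P_{C'}=0$, showing that the condition $n\ge m(h+k)$ in Theorem \ref{thm:defect} cannot in general be relaxed to $n=mk$.
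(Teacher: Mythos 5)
Your proof is correct and follows essentially the same route as the paper: both build $C'$ from an $\RS_{k-1}$ code on $n'=m(k-1)-1$ coordinates padded with $m+1$ zero columns, adjoin a single degree-$(k-1)$ evaluation row that is nonzero on the tail, verify $d=(m-1)(k-1)$ by the same two-case weight count, and invoke the contrapositive of Lemma \ref{lm:conj} on the $(k-1)$-dimensional subcode of support size $m(k-1)-1$. The only cosmetic difference is your all-ones tail for $\bm g$ where the paper uses $x_{n'+1}^{k-1},\dots,x_n^{k-1}$, which changes nothing.
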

\begin{proof}
Let $n'=m(k-1)-1$ and let $x_1,\ldots,x_n$ be distinct elements of $\mathbb{F}_{q^m}$ such that $x_{n'+1},\ldots,x_n\neq 0$. Let $C'=C'(q,m,k)$ be the code with generator matrix
\begin{align*}
G'=\begin{bmatrix}
1&1&\cdots&1&0&\cdots&0\\
x_1&x_2&\cdots&x_{n'}&0&\cdots&0\\
\vdots&\vdots&&\vdots&\vdots&&\vdots\\
x_1^{k-2}&x_2^{k-2}&\cdots&x_{n'}^{k-2}&0&\cdots&0 \\
x_1^{k-1}&x_2^{k-1}&\cdots&x_{n'}^{k-1}&x_{n'+1}^{k-1}&\cdots&x_{n}^{k-1}
\end{bmatrix}.
\end{align*}
Let $D'$ be the subcode generated by the first $k-1$ rows. As $k-1\leq n'$, we have $\dim(D')=k-1$. Moreover, $D'$ has support $\{1,2,\dots,n'\}$, hence $|\supp(D')|=n'=m(k-1)-1<m\dim(D')$. Lemma \ref{lm:conj} implies that $P_{C'}=0$.

Now we compute the minimum distance of $C'$. Take any linear combination $\bm{c}$ of the rows of $G'$. Write $\bm{c}'=(c_1,c_2,\dots,c_{n'})$ and $\bm{x}'=(x_1,x_2,\dots,x_{n'})$. If the last row has zero coefficient, then the last $m+1$ coordinates are $0$ and $\bm{c}'\in\RS_{k-1}(\bm{x}')$. So
\[ \wt(\bm{c})\geq n'-(k-1)+1=(m-1)(k-1), \]
and equality occurs for some $\bm{c}$. If the last row has nonzero coefficient, then the last $m+1$ coordinates of $\bm{c}$ are nonzero and $\bm{c}'\in\RS_k(\bm{x}')$. So
\[ \wt(\bm{c})\geq n'-k+1+m+1>(m-1)(k-1). \]
Thus the minimal distance of $C'(m,k)$ is indeed $d=(m-1)(k-1)$ and $h=m$.
\qed\end{proof}

\section{The dimension of random alternant codes} \label{sec:randalt}

In numerical experiments, one observes that the dimension of random alternant codes typically attains the obvious lower bound; see \cite{Mora2023}. In this short section we derive a proof for this observation from Theorem \ref{thm:defect}. We show that if the length of the random alternant code exceeds $mk$, then the dimension is $n-mk$ with high probability. In particular, this is the case for most random alternant codes of full support.

\begin{definition}
Given the field of definition $\mathbb{F}_q$, the degree $k$ and the extension degree $m$, the random alternant code is a code $\mathcal{A}_k(\bm{x},\bm{a})$, where the support $\bm{x}$ and the multiplier $\bm{a}$ are chosen uniformly at random.
\end{definition}

\begin{proposition}
Let $q$ be a prime power and $m,n,k$ be positive integers such that $mk\leq n \leq q^m$. The random alternant code of length $n$, degree $k$, extension degree $m$ over $\mathbb{F}_q$ has dimension $n-mk$ with probability at least
\begin{align*} %
1-\frac{1-q^{-mk}}{(q-1)q^{n-mk}}.
\end{align*}
\end{proposition}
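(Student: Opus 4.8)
The plan is to derive the proposition directly from Theorem~\ref{thm:defect} applied to a Reed-Solomon code. Set $C=\RS_k(\bm{x})$ for a fixed support $\bm{x}$. Comparing the generator matrices $G$ and $G'$ in Section~\ref{sec:prereq}, multiplying the $i$-th column of $G$ by $a_i$ produces exactly $G'$, so $\GRS_k(\bm{x},\bm{a})=\Ca$. Thus, for fixed $\bm{x}$, drawing the multiplier $\bm{a}$ uniformly from $(\mathbb{F}_{q^m}^*)^n$ yields precisely the family of codes $\Ca$ to which Theorem~\ref{thm:defect} applies.

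Next I would convert the dimension condition on the alternant code into the maximum trace dimension property of the underlying GRS code. By definition $\mathcal{A}_k(\bm{x},\bm{a})=\tr(\GRS_k(\bm{x},\bm{a}))^\perp$, and since dual codes of length $n$ over $\mathbb{F}_q$ have complementary dimensions,
\[ \dim(\mathcal{A}_k(\bm{x},\bm{a}))=n-\dim_{\mathbb{F}_q}\bigl(\tr(\GRS_k(\bm{x},\bm{a}))\bigr). \]
The trace dimension on the right never exceeds $mk$, and it equals $mk$ exactly when $\GRS_k(\bm{x},\bm{a})=\Ca$ has maximum trace dimension. Hence the alternant code attains its minimal dimension $n-mk$ if and only if $\Ca$ has maximum trace dimension.

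I would then invoke Theorem~\ref{thm:defect} for $C=\RS_k(\bm{x})$. This code is MDS with length $n$, dimension $k$, and Singleton defect $h=0$. Substituting $h=0$ into the bound \eqref{eq:rhsmain}, the proportion of multipliers $\bm{a}$ for which $\Ca$ has maximum trace dimension is at least
\[ 1-\frac{1-q^{-mk}}{(q-1)q^{n-mk}}, \]
which is exactly the asserted bound. Together with the equivalence from the previous paragraph, this shows that for every fixed support $\bm{x}$ the conditional probability that $\mathcal{A}_k(\bm{x},\bm{a})$ has dimension $n-mk$ is at least this quantity.

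The crux, and the only point requiring genuine care, is the passage from the random-multiplier model of Theorem~\ref{thm:defect} to the fully random model of the proposition. This is handled by observing that the lower bound above depends only on $q,m,n,k$ and not on $\bm{x}$, because every admissible $\RS_k(\bm{x})$ shares the same MDS parameters. Consequently the unconditional probability over a uniformly random support $\bm{x}$ is a convex combination of conditional probabilities each bounded below by the stated value, so the bound survives averaging. I do not anticipate any deeper obstacle: the argument is essentially a reduction, with the uniformity in $\bm{x}$ being the one ingredient that is special to the random alternant setting.
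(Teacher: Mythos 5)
Your proposal is correct and follows essentially the same route as the paper: reduce via Delsarte duality to the statement that $\GRS_k(\bm{x},\bm{a})=\Ca$ has maximum trace dimension, then apply the main bound with Singleton defect $h=0$ since the code is MDS (the paper phrases this by citing Theorem~\ref{thm:agcodes} with $\deg(G)=k-1$, which is the same bound as Theorem~\ref{thm:defect} with $h=0$). Your explicit averaging over the uniformly random support $\bm{x}$, justified by the bound's independence of $\bm{x}$, merely spells out a step the paper leaves implicit.
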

\begin{proof}
The dual of the alternant code is $\tr(\GRS_{k}(\bm{x},\bm{a}))$. Since $\GRS_{k}(\bm{x},\bm{a})$ is MDS of dimension $k$, Theorem \ref{thm:agcodes} implies the proposition.
\qed\end{proof}

\section{Proof of Theorem \ref{thm:defect}} \label{sec:proof1}

In this section, we use the notation of Theorem \ref{thm:defect}. We describe the average cardinality of $\tr(\Ca)^\perp$, $\bm{a} \in (\mathbb{F}_{q^m}^*)^n$, with the help of the weight distribution of the dual code $C^\perp$. Let us introduce the following notation:
\begin{definition}
Let $\wt:C\to\mathbb N$ denote the Hamming weight and
\begin{align} %
B_w=\card{\bm c\in C^\perp\mid\wt(c)=w}
\end{align}
for $0\leq w\leq n$ the weight distribution of $C$. Then let
\begin{align} %
\lambda(C)=\sum_{w=0}^nB_w\left(\frac{q-1}{q^m-1}\right)^w.
\end{align}
For $0\leq r\leq v\leq n$ let
\begin{align} \label{eq:NG}
N_G(v,r)=\card{k\times v \text{ submatrices of } G \text{ with rank } r},
\end{align}
where $G\in \mathbb F_{q^m}^{k\times n}$ is a generator matrix of $C$.
\end{definition}

\begin{proposition}\label{prop:lambdacavgdim}
	We have the following average form:
\begin{equation}\label{eq:lambda_average}
\lambda(C)=\frac{1}{\mid \left(\mathbb F_{q^m}^*\right)^n \mid }\sum_{\bm a\in\left(\mathbb F_{q^m}^*\right)^n}q^{n-\dim(\tr(\Ca))}.
\end{equation}
\end{proposition}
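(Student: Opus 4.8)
The plan is to rewrite the quantity $q^{n-\dim(\tr(\Ca))}$ as the cardinality of a concrete code, exchange the order of summation, and then reduce the whole count to a coordinate-by-coordinate computation governed by the weights of the codewords of $C^\perp$. Since $\tr(\Ca)$ is an $\mathbb F_q$-subspace of $\mathbb F_q^n$, its orthogonal complement taken inside $\mathbb F_q^n$ has dimension $n-\dim(\tr(\Ca))$, so that $q^{n-\dim(\tr(\Ca))}=|\tr(\Ca)^\perp|$. Applying Delsarte's theorem \cite{Del75} gives $\tr(\Ca)^\perp=(\Ca^\perp)|_{\mathbb F_q}$, turning the exponential into the size of an explicit subfield subcode. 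The first genuine step is to identify $\Ca^\perp$: multiplying the coordinates of $C$ by $\bm a$ dualizes to scaling the coordinates of $C^\perp$ by $\bm a^{-1}$, i.e. $\Ca^\perp=\{(a_1^{-1}c_1,\dots,a_n^{-1}c_n)\mid \bm c\in C^\perp\}$, which I would verify directly from the pairing $\sum_i (a_ix_i)(a_i^{-1}c_i)=\sum_i x_ic_i$.

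With this identification, $|\tr(\Ca)^\perp|$ equals the number of $\bm c\in C^\perp$ for which $a_i^{-1}c_i\in\mathbb F_q$ holds in every coordinate, because the scaling map $\bm c\mapsto(a_i^{-1}c_i)_i$ is a bijection of $C^\perp$ onto $\Ca^\perp$. Substituting this into the right-hand side of \eqref{eq:lambda_average} and exchanging the two finite sums, I would reorganize the average as a sum over codewords $\bm c\in C^\perp$, weighting each $\bm c$ by the number of multipliers $\bm a\in(\mathbb F_{q^m}^*)^n$ that satisfy the coordinatewise membership condition $a_i^{-1}c_i\in\mathbb F_q$.

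The count then factorizes over coordinates. If $c_i=0$, the condition is automatic and all $q^m-1$ nonzero values of $a_i$ are admissible; if $c_i\neq0$, the condition forces $a_i^{-1}c_i\in\mathbb F_q^*$, leaving exactly $q-1$ admissible values of $a_i$. Hence a codeword of weight $w=\wt(\bm c)$ is counted by exactly $(q^m-1)^{n-w}(q-1)^w$ multipliers, and collecting codewords by weight yields
\begin{equation*}
\sum_{\bm a\in(\mathbb F_{q^m}^*)^n} q^{n-\dim(\tr(\Ca))}=\sum_{w=0}^n B_w\,(q^m-1)^{n-w}(q-1)^w.
\end{equation*}
Dividing by $|(\mathbb F_{q^m}^*)^n|=(q^m-1)^n$ and simplifying the ratio $(q^m-1)^{n-w}(q-1)^w/(q^m-1)^n=((q-1)/(q^m-1))^w$ gives precisely $\lambda(C)$, as claimed.

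The computation is elementary, so there is no deep obstacle; the points that require care are the two bijections underlying the double count. First, for a fixed multiplier the scaling map $\bm c\mapsto(a_i^{-1}c_i)_i$ is injective on $C^\perp$, so that $|\tr(\Ca)^\perp|$ genuinely equals the number of codewords $\bm c\in C^\perp$ meeting the coordinatewise condition, with no overcounting. Second, after exchanging the order of summation the condition $a_i^{-1}c_i\in\mathbb F_q$ splits independently across coordinates; checking that the per-coordinate counts are exactly $q^m-1$ and $q-1$ as above is the crux, and from there the identity follows by collecting terms according to $\wt(\bm c)$.
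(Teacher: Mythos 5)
Your proposal is correct and follows essentially the same route as the paper: the paper double-counts the set $H=\{(\bm a,\bm c)\mid \bm a^{-1}\star\bm c\in\mathbb F_q^n,\ \bm a\in(\mathbb F_{q^m}^*)^n,\ \bm c\in C^\perp\}$, using Delsarte's theorem together with the identification $(\Ca)^\perp=(C^\perp)_{\bm a^{-1}}$ to get $q^{n-\dim(\tr(\Ca))}$ on one side, and the same coordinatewise count ($q^m-1$ choices of $a_i$ when $c_i=0$, $q-1$ when $c_i\neq 0$) to get $(q-1)^{\wt(\bm c)}(q^m-1)^{n-\wt(\bm c)}$ on the other. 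Your version merely narrates this double count as an exchange of summation order rather than naming the set $H$ explicitly; all the key steps coincide and there are no gaps.
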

\begin{proof}
For $\bm{a}\in (\mathbb{F}_{q^m}^*)^n$, we write $\bm{a}^{-1}=(a_j^{-1})_{1\leq j\leq n}$. We double-count the set
\begin{align} %
H=\{(\bm{a},\bm{c}) \mid \bm{a}^{-1}\star \bm{c}\in \mathbb{F}_q^n, \bm{a}\in (\mathbb{F}_{q^m}^*)^n, \bm{c}\in C^\perp\}.
\end{align}
For any fixed $\bm{a}$, $(\bm{a},\bm{c})\in H$ if and only if $\bm{a}^{-1}\star\bm{c}\in (C^{\perp})_{\bm a^{-1}}\cap\mathbb{F}_q^n$. By Delsarte's theorem \cite[Theorem 2]{Del75}, we have
\begin{align} %
(C^{\perp})_{\bm a^{-1}}\cap\mathbb{F}_q^n=(\Ca)^\perp\cap\mathbb{F}_q^n=(\tr(\Ca))^\perp.
\end{align}
Hence, $|(C^{\perp})_{\bm a^{-1}}\cap\mathbb{F}_q^n|=q^{n-\dim(\tr(C_{\bm{a}}))}$. This proves
\begin{align} %
|H|=\sum_{\bm a\in\left(\mathbb F_{q^m}^*\right)^n}q^{n-\dim(\tr(\Ca))}.
\end{align}
Let us now fix $\bm c\in C^\perp$. For each $j$, we have
\begin{align} %
\{a_j\in\mathbb{F}_{q^m}^*\mid a_j^{-1}c_j\in\mathbb{F}_q\}=
\begin{cases}
\mathbb{F}_{q^m}^*, & \text{if }c_j=0;\\
c_j\mathbb{F}_q^*, & \text{if }c_j\neq 0.
\end{cases}
\end{align}
Thus
\begin{align} %
\card{\bm a \in (\mathbb{F}_{q^m}^*)^n \mid \bm a^{-1}\star \bm c\in\mathbb{F}_q^n}=(q-1)^{\wt(\bm c)}(q^m-1)^{n-\wt(\bm c)},
\end{align}
summing over all $\bm c\in C^\perp$, we get $|H|= (q^m-1)^n \lambda(C)$.
\qed\end{proof}

As $\dim(\tr(\Ca))\leq km$, each summand on the right-hand side of \eqref{eq:lambda_average} is at least $q^{n-km}$. This gives a lower bound
\begin{equation}
\lambda(C)\geq q^{n-km}.
\end{equation}
The upper bounds of $\lambda(C)$ can be used to find lower bounds on the proportion $P_C$ of multiplier vectors which produce maximum trace dimension codes.
\begin{proposition} \label{prop:Plowerbound}
Assume $\lambda(C)\leq q^{n-km}+D$. Then
\begin{align} %
P_C\geq 1-\frac{D}{(q-1)q^{n-km}}.
\end{align}
\end{proposition}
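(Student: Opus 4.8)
The plan is to bound the proportion $P_C$ by relating the average value of $q^{n-\dim(\tr(\Ca))}$ (which equals $\lambda(C)$ by Proposition \ref{prop:lambdacavgdim}) to the contribution of those multipliers $\bm a$ that \emph{fail} to produce maximum trace dimension. The key observation is that each summand $q^{n-\dim(\tr(\Ca))}$ on the right-hand side of \eqref{eq:lambda_average} is at least $q^{n-km}$, with equality precisely when $\Ca$ has maximum trace dimension, and is strictly larger otherwise.

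First I would split the average over all $\bm a\in(\mathbb F_{q^m}^*)^n$ into the multipliers that achieve maximum trace dimension and those that do not. Writing $N=|(\mathbb F_{q^m}^*)^n|=(q^m-1)^n$ for the total count and $M$ for the number of multipliers achieving maximum trace dimension (so $P_C=M/N$), the maximum-trace-dimension multipliers each contribute exactly $q^{n-km}$, while the remaining $N-M$ multipliers contribute \emph{at least} the next-smallest possible value. The crucial quantitative input is that whenever $\Ca$ fails maximum trace dimension, $\dim(\tr(\Ca))\leq km-1$, so the corresponding summand is at least $q^{n-km+1}$.

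Assembling these, I would write
\begin{align*}
N\,\lambda(C)=\sum_{\bm a}q^{n-\dim(\tr(\Ca))}\geq M\,q^{n-km}+(N-M)\,q^{n-km+1}.
\end{align*}
Combining this with the hypothesis $\lambda(C)\leq q^{n-km}+D$ gives
\begin{align*}
N\bigl(q^{n-km}+D\bigr)\geq M\,q^{n-km}+(N-M)\,q^{n-km+1},
\end{align*}
and rearranging this linear inequality to isolate $P_C=M/N$ should yield exactly the claimed bound $P_C\geq 1-D/((q-1)q^{n-km})$. The algebra simplifies because the coefficient of $(N-M)$ carries the extra factor of $q$, which is what produces the $(q-1)$ in the denominator.

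I expect the only real subtlety to be the correct handling of the gap between the minimal summand $q^{n-km}$ and the next value $q^{n-km+1}$: one must be confident that any deficiency in trace dimension drops it by at least a full unit (so the exponent jumps by at least one), which is immediate since dimensions are integers. The rest is a routine rearrangement, so the main ``obstacle'' is really just bookkeeping rather than any genuine difficulty; the substantive content has already been carried by Proposition \ref{prop:lambdacavgdim}.
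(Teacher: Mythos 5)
Your proposal is correct and follows essentially the same route as the paper's own proof: the paper likewise splits the average in \eqref{eq:lambda_average} according to whether $\Ca$ attains maximum trace dimension, notes the failing summands are at least $q^{n-km+1}$, obtains $P_Cq^{n-km}+(1-P_C)q^{n-km+1}\leq\lambda(C)$, and rearranges. Your algebra checks out, including the observation that the integer dimension drop produces the factor $(q-1)$ in the denominator.
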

\begin{proof}
If $\Ca$ does not have maximum trace dimension, then the corresponding summand in \eqref{eq:lambda_average} is at least $q^{n-km+1}$. Therefore
$P_Cq^{n-km}+(1-P_C)q^{n-km+1}\leq\lambda(C)$. The claim follows from a straightforward computation.
\qed\end{proof}

\begin{proposition}
\begin{align} %
\lambda(C)=\left(\frac{q^m-q}{q^m-1}\right)^n\sum_{v=0}^n\left(\frac{q-1}{q^m-q}\right)^v\sum_{r=0}^vN_G(v,r)q^{m(v-r)},
\end{align}
where $N_G(v,r)$ is as defined in (\ref{eq:NG}).
\end{proposition}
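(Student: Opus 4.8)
The plan is to recognise $\lambda(C)$ as a weighted weight enumerator of $C^\perp$ and to expand it over the subsets of coordinate positions. Writing $t=\frac{q-1}{q^m-1}$, the definition reads $\lambda(C)=\sum_{\bm c\in C^\perp}t^{\wt(\bm c)}$, and since $t\neq 0$ (because $q\geq 2$) I would rewrite each summand through the complementary (zero) support: if $Z(\bm c)=\{i\mid c_i=0\}$, then $\wt(\bm c)=n-|Z(\bm c)|$, so $t^{\wt(\bm c)}=t^n(1/t)^{|Z(\bm c)|}$. The conceptual heart is then the elementary subset expansion
\begin{align*}
(1/t)^{|Z(\bm c)|}=\bigl(1+(1/t-1)\bigr)^{|Z(\bm c)|}=\sum_{S\subseteq Z(\bm c)}(1/t-1)^{|S|},
\end{align*}
which replaces the single weight contribution of $\bm c$ by a sum indexed by the subsets $S$ of its zero set.

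Substituting this and interchanging the order of summation, I would obtain
\begin{align*}
\lambda(C)=t^n\sum_{S\subseteq\{1,\dots,n\}}(1/t-1)^{|S|}\,\bigl|\{\bm c\in C^\perp\mid S\subseteq Z(\bm c)\}\bigr|.
\end{align*}
The inner cardinality counts the codewords of $C^\perp$ that vanish on $S$, that is, those supported on the complement $U=S^c$. Here the key algebraic input enters: a vector supported on $U$ lies in $C^\perp$ exactly when its restriction to $U$ is annihilated by the column submatrix $G_U$ of a generator matrix $G$ of $C$. Thus these codewords form the kernel of the map $\bm y\mapsto G_U\bm y^\top$, of dimension $|U|-\rk(G_U)$, so their number is precisely $q^{m(|U|-\rk(G_U))}$.

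Passing to the complementary index $U=S^c$ and grouping the subsets $U$ by their size $v=|U|$ and rank $r=\rk(G_U)$ collects exactly $N_G(v,r)$ equal terms, giving
\begin{align*}
\lambda(C)=t^n\sum_{v=0}^n(1/t-1)^{n-v}\sum_{r=0}^v N_G(v,r)\,q^{m(v-r)}.
\end{align*}
It then remains to substitute $t=\frac{q-1}{q^m-1}$ together with $1/t-1=\frac{q^m-q}{q-1}$ and to simplify the scalar prefactor $t^n(1/t-1)^{n-v}=\frac{(q-1)^v(q^m-q)^{n-v}}{(q^m-1)^n}$, which is readily seen to equal $\left(\frac{q^m-q}{q^m-1}\right)^n\left(\frac{q-1}{q^m-q}\right)^v$, yielding the claimed identity. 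I expect no serious obstacle: the only point demanding care is bookkeeping the direction of the expansion (support contained in the complement, not containing) so that the passage $U=S^c$ lines up the powers of $q$ with $N_G(v,r)$ correctly; once the subset expansion and the kernel-dimension count are in place, the rest is routine algebraic simplification.
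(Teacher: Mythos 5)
Your proof is correct, and it takes a genuinely more self-contained route than the paper. The paper invokes Proposition~3 of \cite{Meneghetti2022} as a black box, namely the identity $\sum_{s=0}^v\binom{n-s}{v-s}B_s=\sum_{r=0}^v N_G(v,r)q^{m(v-r)}$ for each fixed $v$, then multiplies by $x^v$, sums over $v$, applies the binomial theorem to the left-hand side, and substitutes $x=\frac{q-1}{q^m-q}$. You instead prove the required combinatorics from scratch: your kernel-dimension count $\bigl|\{\bm c\in C^\perp\mid \supp(\bm c)\subseteq U\}\bigr|=q^{m(|U|-\rk(G_U))}$ is exactly the fact underlying the cited identity, and your double count of pairs $(\bm c,S)$ with $S\subseteq Z(\bm c)$, restricted to sets $S$ of fixed size $n-v$, reproduces that identity (a codeword of weight $s$ vanishes on $\binom{n-s}{v-s}$ of the complementary $v$-sets), while your subset expansion $(1/t)^{|Z(\bm c)|}=\sum_{S\subseteq Z(\bm c)}(1/t-1)^{|S|}$ plays precisely the role of the paper's binomial-theorem step. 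So the mathematical content coincides, but your organization fuses the citation and the generating-function manipulation into a single counting argument. The paper's route buys brevity and a uniform reliance on \cite{Meneghetti2022}, whose Lemma~4 is reused in the proof of Theorem~\ref{thm:defect}; yours buys independence from the external reference and makes transparent why the rank statistics $N_G(v,r)$ appear at all: they enumerate kernels of column submatrices of $G$, i.e., codewords of $C^\perp$ with prescribed zero positions. All your algebraic steps check out, including the final simplification $t^n(1/t-1)^{n-v}=\left(\frac{q^m-q}{q^m-1}\right)^n\left(\frac{q-1}{q^m-q}\right)^v$.
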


\begin{proof}

Applying Proposition 3 of \cite{Meneghetti2022} for $C^\perp$ over $\mathbb{F}_{q^m}$, we get
\begin{align} %
\sum_{s=0}^v\binom{n-s}{v-s}B_{s} = \sum_{r=0}^vN_{G}(v,r)q^{m(v-r)}.
\end{align}
Multiplying with $x^v$ and summing over $0\leq v\leq n$, we obtain
\begin{equation}\label{eq:BsNg}
\sum_{v = 0}^{n}x^v\sum_{s=0}^v\binom{n-s}{v-s}B_{s} = \sum_{v = 0}^{n}x^v\sum_{r=0}^vN_G(v,r)q^{m(v-r)}.
\end{equation}
Changing the order of the summation and using the binomial theorem, the left hand side is
\begin{align} %
\sum_{s = 0}^{n}B_{s}x^{s}\sum_{v=s}^{n}\binom{n-s}{v-s}x^{v-s} = \sum_{s = 0}^{n}B_{s}x^{s}(1+x)^{n-s}.
\end{align}
Let us put $x = \frac{q-1}{q^{m}-q}$, thus $1 + x = \frac{q^{m}-1}{q^{m}-q}$. Then by the definition,
\begin{align} %
\lambda(C) = \sum_{s=0}^nB_s\frac{(q-1)^{s}(q^m-1)^{n-s}}{(q^m-1)^{n}} = \left(\frac{q^{m}-q}{q^{m}-1}\right)^{n}\sum_{s = 0}^{n}B_{s}x^{s}(1+x)^{n-s}.
\end{align}
By (\ref{eq:BsNg}), we have
\begin{align} %
\lambda(C)=\left(\frac{q^{m}-q}{q^{m}-1}\right)^{n}
\sum_{v = 0}^{n}x^v\sum_{r=0}^vN_{G}(v,r)q^{m(v-r)},
\end{align}
hence the proposition.
\qed\end{proof}

\begin{proof}[Proof of Theorem \ref{thm:defect}]
Applying Lemma 4 in \cite{Meneghetti2022} for $C^\perp$, all $k\times(k+h)$ submatrix of $G$ has rank $k$. It follows that the rank of all $k\times v$ submatrix equals $k$ if $v\geq k+h$ and is at least $v-h$ if $v<k+h$.

By using this observation, we can bound the inner sum on the right hand side of the previous Proposition:

\begin{itemize}
    \item For $v\geq k+h$, we have $N_G(v,r)=0$ for $r<k$ and
    \begin{align} \label{eq:KS28}
\sum_{r=0}^vN_G(v,r)q^{m(v-r)}=\binom nvq^{m(v-k)},
\end{align}
    \item for $v<k+h$, we have $N_{G}(v,r) = 0$ for $r<v-h$ and
    \begin{align} \label{eq:KS29}
\sum_{r=0}^vN_G(v,r)q^{m(v-r)}\leq \binom nvq^{mh}.
\end{align}
\end{itemize}

In view of \eqref{eq:KS28}, \eqref{eq:KS29} and $x=\frac{q-1}{q^m-q}$, we obtain
\begin{eqnarray*}
\lambda(C)\le \left(\frac{q^{m}-q}{q^m-1}\right)^{n}\left(\sum_{v = 0}^{k+h-1}\binom{n}{v}x^{v}q^{mh} + \sum_{v = k+h}^{n}\binom{n}{v}x^{v}q^{m(v-k)}\right)\\
= \frac{1}{q^{mk}}\left(\frac{q^{m}-q}{q^m-1}\right)^{n}\left(\sum_{v = 0}^{n}\binom{n}{v}(xq^{m})^{v} + \sum_{v=0}^{k+h-1}\binom{n}{v}x^{v}(q^{m(h+k)}-q^{mv})\right)
\\
\le \frac1{q^{mk}}\left(\frac{q^{m}-q}{q^m-1}\right)^{n}\left((1+xq^m)^n + (q^{m(h+k)}-1)\sum_{v = 0}^{k+h}\binom{n}{v}x^v\right)\\
\leq
\frac{1}{q^{mk}}\left(\frac{q^{m}-q}{q^m-1}\right)^{n}\left((1+xq^m)^n + (q^{m(h+k)}-1)(1+x)^n\right)
\end{eqnarray*}

As $1+x=\frac{q^m-1}{q^m-q}$ and $1+xq^m=q\cdot\frac{q^m-1}{q^m-q}$, we get
\begin{align} %
\lambda(C)\le q^{n-mk} + (q^{mh}-q^{-mk}).
\end{align}

By using Proposition \ref{prop:Plowerbound} we get the lower bound on $P_C$ as in the statement.
\qed\end{proof}

\section{Proof of Theorem \ref{thm:agcodes}} \label{sec:proof2}

Algebraic geometry (AG) codes are linear error-correcting codes constructed from algebraic curves over finite fields, generalizing the Reed-Solomon code concept. They are defined by \emph{evaluating functions} or by using \emph{residues of differentials}. Their parameters can be derived from well-known algebraic geometry theorems. Our notation and terminology on algebraic plane curves over finite fields, their function fields, divisors, and Riemann-Roch spaces are conventional; see, for example, \cite{stichtenoth2009algebraic}.

Let $\mathcal{X}$ be an algebraic curve, that is, an affine or projective variety of dimension one, which is absolutely irreducible and non-singular and whose defining equations are (homogeneous) polynomials with coefficients in $\mathbb{F}_q$. Let $g=g(\mathcal{X})$ be the genus of $\mathcal{X}$. $\mathbb{F}_q(\mathcal{X})$ denotes the function field of $\mathcal{X}$. A \emph{divisor} $D$ of $\mathcal{X}$ is a formal sum $D=n_1P_1+\cdots+n_kP_k$, where $n_1,\ldots,n_k \in \mathbb{Z}$ and $P_1,\ldots,P_k$ are places of $\mathbb{F}_q(\mathcal{X})$. If $n_1,\ldots,n_k\geq 0$, then $D\succeq 0$. If $D,E$ are two divisors and $D-E\succeq 0$, then $D\succeq E$. In the case of a nonzero function $f$ of the function field $\mathbb{F}_{q}(\mathscr{X})$, and a place $P$, $v_P(f)$ stands for the order of $f$ at $P$. If $v_P(f)>0$ then $P$ is a zero of $f$, while if $v_P(f)<0$, then $P$ is a pole of $f$ with multiplicity $-v_P(f)$. The \emph{principal divisor} of a nonzero function $f$ is $\mathrm{Div}(f) = \sum_P v_P(f)P$.

For a divisor $D$, the associated Riemann-Roch space $\mathscr{L}(D)$ is the vector space
\[\mathscr{L}(D)=\{f\in \mathbb{F}_{q}(\mathcal{X})\setminus \{0\} \mid \mathrm{Div}(f)\succeq -D \}\cup \{0\}.\]
The dimension $\ell(D)$ of $\mathscr{L}(D)$ is given by the Riemann-Roch Theorem \cite[Theorem 1.1.15]{stichtenoth2009algebraic}:
\[\ell(D)= \ell(W-D) + \deg D - g + 1,\]
where $W$ is a canonical divisor. We denote the set of \emph{differentials} on $\mathcal{X}$ by $\Omega$. The \emph{differential space} of the divisor $D$ is
\[\Omega(D)=\{dh\in \Omega \mid \mathrm{Div}(dh)\succeq A \}\cup \{0\}.\]
In the following, $P_1, P_2, \cdots,P_n$ are pairwise distinct places on $\mathcal{X}$, and $D$ is the divisor $D=P_1+...+P_n$. Let $G$ be another divisor with support disjoint from $D$. We define two types of AG codes, the \emph{functional} and the \emph{differential codes,} respectively:
\begin{align*}
C_L(D,G)&=\left\lbrace \left( f(P_1),\cdots,f(P_n) \right) \mid f\in \mathscr{L}(G)   \right\rbrace,\\
C_\Omega(D,G)&= \left\lbrace \left( \mathrm{res}_{P_1}(\omega),\cdots,\mathrm{res}
_{P_n}(\omega)\right) \mid \omega \in \Omega(G-D)\right\rbrace.
\end{align*}
These codes are dual to each other, and $C_\Omega(D,G)=C_L(D,K+D-G)$ for a well-chosen canonical divisor $K$. The Riemann-Roch theorem enables us to estimate the dimension and the minimum distance of AG codes:
\[\dim(C_L(D,G)) \begin{cases}
\geq \deg(G)-g+1 & 0\leq \deg(G) \leq 2g-2, \\
= \deg(G)-g+1 & 2g-2 \leq \deg(G) \leq n, \\
\leq \deg(G)-g+1 & n \leq \deg(G) \leq n+2g-2.
\end{cases}\]
The minimum distance of a functional code is at least its \emph{designed minimum distance}
\[\delta_L = n-\deg(G).\]

\begin{proof}[Proof of Theorem \ref{thm:agcodes}]
Let $k$ be the dimension, and $h$ be the Singleton defect of the AG code $C=C_L(D,G)$. Then $h+k=n+1-d\leq n+1-\delta_L=\deg(G)+1$. As the right hand side of \eqref{eq:rhsmain} is monotone decreasing in $h+k$, the formula \eqref{eq:rhsag} follows.
\qed\end{proof}

\section{Conclusion}

We gave a lower bound for the probability that the dimension of the trace code of a linear code with a random multiplier vector attains the obvious upper bound. Our formula only uses the size of the underlying field, the degree of the field extension, and the three main parameters of the code: length, dimension and minimum distance. The result provided a compact formula for the probability that an AG code has maximum trace dimension. We also proved by mathematical means that full support random alternant codes have dimension $n-mk$ with high probability. These pieces of information are useful to understand better the size of public keys in code-based cryptography, and the complexity of Monte Carlo algorithms in the key generation process.

Our approach works for the probabilistic model of random multiplier vectors. Random Goppa codes have a different probability paradigm. Therefore, our results do not solve the dimension problem for random Goppa codes. This needs further research, but we are optimistic that our method can be extended.

Another open case is when the length $n$ of the code is less than the product of the extension degree $m$ and the dimension $k$. Then, one asks for the probability that random multiplier vectors produce codes of dimension $n$. Understanding this could help to extend our results for the parameters $q=2$ and $n=mk$.

\bibliography{max_trace_dim,wpref}
\bibliographystyle{splncs04}

\end{document}